\newtheorem{theorem}{Theorem}
\newtheorem{aexample}[theorem]{Example}
\newtheorem{lemma}[theorem]{Lemma}
\newtheorem{aproblem}[theorem]{Problem}
\newtheorem{acomment}[theorem]{Comment}
\newtheorem{aremark}[theorem]{Remark}
\newenvironment{remark}{\begin{aremark}\rm}{\end{aremark}}
\newtheorem{result}[theorem]{Result}
\newenvironment{proof}[1][Proof]{\noindent\textbf{#1.} }{\ \rule{0.5em}{0.5em}}
\numberwithin{equation}{section} \numberwithin{theorem}{section}
\begin{document}

\title{The absence of the selfaveraging property \\
of the entanglement entropy \\
of disordered free fermions \\
in one dimension}
\author{L. Pastur and V. Slavin \\
B. Verkin Institute for Low Temperatures Physics \\
and Engineering, Kharkiv, Ukraine}
\date{}
\maketitle

\begin{abstract}
We consider the macroscopic system of free lattice fermions in one
dimension assuming that the one-body Hamiltonian of the system is the one
dimensional discrete Schr\"odinger operator with independent identically
distributed random potential. We show analytically and numerically that the
variance of the entanglement entropy of the segment $[-M,M]$ of the system
is bounded away from zero as $M\rightarrow \infty $. This manifests the
absence of the selfaveraging property of the entanglement entropy in our
model, meaning that in the one-dimensional case the complete description of
the entanglement entropy is provided by its whole probability distribution.
This also may be contrasted the case of dimension two or more, where the
variance of the entanglement entropy per unit surface area vanishes as $%
M\rightarrow \infty $ \cite{El-Co:17}, thereby guaranteeing the
representativity of its mean for large $M$ in the multidimensional case.

PACS numbers 03.67.Mn, 03.67, 05.30.Fk, 72.15.Rn
\end{abstract}


\section{Introduction}

This note is an addition to the paper \cite{El-Co:17} by A. Elgart, M.
Shcherbina and the first author in which it is proved the following.
Consider the macroscopic system of free disordered fermions living on the $d$%
-dimensional lattice $\mathbb{Z}^{d}$ and having the discrete Schr\"{o}dinger operator
\begin{equation}
H=-\Delta _{d}+V,  \label{SD}
\end{equation}%
as the one-body Hamiltonian. Here%
\begin{equation}
(\Delta _{d}u)(x)=-\sum_{|x-y|=1}u(y)+2du(x),\;x\in \mathbb{Z}^{d}
\label{lad}
\end{equation}%
is the $d$-dimensional discrete Laplacian and
\begin{equation}
V=\{V(x)\}_{x\in \mathbb{Z}^{d}}  \label{pot}
\end{equation}%
is the random ergodic potential. Assume that the Fermi energy $E$ of the
system lies in the exponentially localized part of spectrum of $H$. This
means that the Fermi projection
\begin{equation}
P=\{P(x,y)\}_{x,y\in \mathbb{Z}^{d}}=\{\mathcal{E}_{H}(x,y;(-\infty
,E))\}_{x,y\in \mathbb{Z}^{d}}  \label{fp}
\end{equation}%
of $H$, i.e., its spectral projection measure $\mathcal{E}_{H}$
corresponding to the spectral interval \ $(-\infty ,E)$, admits the bound
\begin{equation}
\mathbf{E}\{|P(x,y)|\}\leq Ce^{-\gamma |x-y|},\;x,y\in \mathbb{Z}^{d}
\label{b_P}
\end{equation}%
for some $C<\infty $ and $\gamma >0$. Here and below the symbol $\mathbf{E}%
\{...\}$ denotes the expectation with respect to the random potential.

We refer the reader to \cite{El-Co:17} for the discussion of the cases where
the bound (\ref{b_P}) holds and guaranties the pure point spectrum of (\ref%
{SD}) with exponentially decaying eigenfunctions (exponential localization).
It is important for us in this paper that in the one-dimensional case $d=1$
the bound holds on the whole spectrum of $H$ if the potential (\ref{pot}) is
the collection of independent identically distributed (i.i.d.) random
variables.

Given the lattice cube (the block)
\begin{equation*}
\Lambda =[-M,M]^{d}\subset \mathbb{Z}^{d}
\end{equation*}%
of a macroscopic quantum system, we define the entanglement (von Neumann)
entropy of the block as
\begin{equation}
S_{\Lambda }=\mathrm{Tr}\ \rho_\Lambda \log \rho_\Lambda,  \label{vN}
\end{equation}%
where $\rho_\Lambda$ is the reduced density matrix for the block. In
particular, for the macroscopic system of free lattice spinless fermions at
zero temperature (i.e., at their ground state) we have \cite%
{Ab-St:15,Ar-Co:14}%
\begin{equation}
S_{\Lambda }=\mathrm{Tr}\ h(P_{\Lambda }),  \label{ee}
\end{equation}%
where%
\begin{equation}
h(x)=-x\log x-(1-x)\log (1-x),\;x\in \lbrack 0,1]  \label{h}
\end{equation}%
and
\begin{equation}
P_{\Lambda }=\{P(x,y)\}_{x,y\in \Lambda }  \label{fpl}
\end{equation}%
is the restriction of the Fermi projection \ref{fp}) to the block $\Lambda $.

It is proved in \cite{El-Co:17} that for any ergodic potential satisfying
condition (\ref{b_P}) of the exponential localization the entanglement
entropy satisfies the area law in the mean, i.e., there exists the limit%
\begin{equation}
\lim_{L\rightarrow \infty }\mathbf{E}\{L^{(d-1)}S_{\Lambda }\}=2d\,\mathbf{E}%
\{\mathrm{Tr}\,h(P_{\mathbb{Z}_{+}^{d}})\},\;L=2M+1,  \label{alm}
\end{equation}%
where $P_{\mathbb{Z}_{+}^{d}}=\{P(x,y)\}_{x,y\in \mathbb{Z}_{+}^{d}}$ is the
restriction of the Fermi projection (\ref{fp}) to the $d$-dimensional
lattice half-space%
\begin{equation*}
\mathbb{Z}_{+}^{d}=\mathbb{Z}_{+}\times \mathbb{Z}^{d-1},\;\mathbb{Z}%
_{+}=[0,1,..).
\end{equation*}%
See \cite{Ca-Co:11,Ei-Co:11,La:15,Le-Co:13} for various results on the
validity of the area law and its violation in translation invariant
(non-random) systems.

It was also shown in \cite{El-Co:17} that if the random potential is a
collection of i.i.d. random variables and (\ref{b_P}) holds, then there
exist some $C_{d}<\infty $ and $a_{d}>0$ such that
\begin{align}\label{vare}
\mathbf{Var}\{L^{(d-1)}S_{\Lambda }\}:= &\mathbf{E}\{(L^{(d-1)}S_{\Lambda
})^{2}\}-\mathbf{E}^{2}\{L^{(d-1)}S_{\Lambda }\}
\leq C_{d}/L^{a_{d}},\;\;d\geq 2,
\end{align}%
i.e., that the fluctuations of the entanglement entropy per unit surface
area vanish as $L\rightarrow \infty $.

The relations (\ref{alm}) and (\ref{vare}) imply that if $d \ge 2$ then
the entanglement entropy per unit surface area possesses the
selfaveraging property (see, e.g., \cite{Da-Co:14,Di-Co:98,LGP,Pa-Fi:92} for
discussion and use of the property in the condensed matter theory, spectral
theory and the quantum information theory where it is known as the
typicality).

On the other hand, it follows from the numerical results of \cite{Pa-Sl:14}
that for $d=1$ the fluctuations of the entanglement entropy of the lattice
segment $\Lambda =[-M,M]$ do not vanish as $M\rightarrow \dot{\infty}$ and
according to \cite{El-Co:17} we have for every typical realization (with
probability 1)%
\begin{equation}
S_{[-M,M]} =S_{\mathbb{Z}}+o(1),\;M\rightarrow \infty ,  \label{sp1} \\
\end{equation}%
where
\begin{equation}\label{sz}
S_{\mathbb{Z}} =S_{+}(T^{M}\omega )+S_{-}(T^{-M}\omega )
\end{equation}%
and
\begin{equation}
S_{\pm }(\omega )=\mathrm{Tr}\,h(P_{\mathbb{Z}_{\mp }}(\omega )),\ .
\label{spm}
\end{equation}
are non-zero with probability 1. Here and below $\omega =\{V(x)\}_{x\in
\mathbb{Z}}$ denotes a realization of random ergodic potential and $T$ is
the shift operator acting in the space of realizations of potential as $%
T\omega =\{V(x+1)\}_{x\in \mathbb{Z}}$. This suggests that for
i.i.d. random potential the entanglement entropy (\ref{vN}) (see also (\ref{sp1}) -- (\ref{spm})) of \ disordered free fermions is
not a selfaveraging quantity for $d=1$ .

In this note we confirm the suggestion by establishing an $M$-independent
and strictly positive lower bound on the variance of the entanglement
entropy for $d=1$. Unfortunately, the class of random i.i.d. potentials, for
which this results is established, is somewhat limited (see, e.g. Remark \ref{r:f}). However, since the absence of selfaveraging property is not
completely common and sufficiently studied in the theory of disordered
systems, we believe that our result is of certain interest.

The paper is organized as follows. In Section 2 we formulate and prove
analytically a strictly positive for all sufficiently large $L=2M+1$ lower bound
for the entanglement entropy (\ref{sp1}) -- (\ref{spm}) of one dimensional free fermions. In Section 3
we present our numerical results which confirm and illustrate the analytic
results. Section 4 contains several auxiliary facts which we need in Section
2.

\section{Analytical results.}
Here we formulate and prove a lower bound for the entanglement entropy
(\ref{sp1}) -- (\ref{spm}) of disordered free fermions modulo several
technical facts of Section 4.

\begin{result}
Consider the one dimensional macroscopic system of free lattice fermions
whose one-body Hamiltonian is the discrete Schr\"{o}dinger operator
(\ref{SD}) with i.i.d. random potential (\ref{pot}). Assume that the
common probability distribution of $V(x),\;x\in \mathbb{Z}$ has a
bounded density $f$ such that

\medskip (i) $\mathrm{supp\ }f=[0,\infty )$ and for some $\kappa >0$
\begin{equation}
\int_{0}^{\infty }v^{\kappa }f(v)dv<\infty ;  \label{cl2}
\end{equation}

\medskip (ii) the quantity
\begin{equation}
F(t)=J(t)-1,\;J(t)=\int_{0}^{\infty }\frac{f^{2}(v-t)}{f(v)}dv  \label{F}
\end{equation}%
is finite for all sufficiently large $t>0.$

Then there exist a sufficiently large $t_{0}>0$ and $M_{0}>0$ such that we
have for the entanglement entropy (\ref{sp1}) -- (\ref{spm}) uniformly in $M>M_{0}$
\begin{equation}
\mathbf{Var}\{S_{[-M,M]}\}:=\mathbf{E}\{S_{[-M,M]}^{2}\}-\mathbf{E}%
^{2}\{S_{[-M,M]}\}\geq A>0,  \label{var}
\end{equation}

\begin{equation}
A=\mathbf{E}^{2}\{S_{-}\}/F(t_{0})  \label{AF}
\end{equation}%
and $S_{-}$ is defined in (\ref{spm}).
\end{result}

\begin{remark}
\label{r:f} (i) It is easy to show (see (\ref{Fpos}) below) that $F\geq 0$.
Moreover, $F$ is unbounded as $t\rightarrow \infty $. Indeed, we have from (%
\ref{F}) and the Jensen inequality%
\begin{eqnarray*}
J(t) &=&\int_{0}^{\infty }\frac{f(v)}{f(v+t)}f(v)dv \\
&\geq &\left( \int_{0}^{\infty }\left( \frac{f(v)}{f(v+t)}\right)
^{-1}f(v)dv\right) ^{-1}=\left(\int_{t}^{\infty }f(v)dv\right)^{-1}\rightarrow \infty
,\;t\rightarrow \infty .
\end{eqnarray*}%
Thus, $A$ in (\ref{var}) -- (\ref{AF}) can be rather small (see, however,
Fig. \ref{fig5}). Note also that above
lower bound for $I$ is exact for the density $f(v)=\delta^{-1} e^{-v/\delta }$.

\medskip (ii) Condition (i) of the result can be replaced by that for the
support of $f$ to be bounded from below. However, a compact support is not
allowed, since in this case $J(t)$ in (\ref{F}) is not well defined for
large $t$, since the supports of the numerator and denominator in $J(t)$ do
not intersect. Moreover, even if the support of $f$ is the positive
semi-axis, $f$ should not have zeros of the order 1 and higher.

\medskip (iii) Our result is also valid for the quantum R\'enyi entropy
defined in general as (cf. (\ref{vN}))
\begin{equation}  \label{Rg}
S_\Lambda^{(\alpha)}= (1-\alpha)^{-1} \log \mathrm{Tr} \
\rho_\Lambda^\alpha, \; \alpha \in (0,\infty).
\end{equation}
The case $\alpha=1$ corresponds to the von Neumann entropy (\ref{vN}). For
the free lattice fermions in their ground state we have
\begin{equation}  \label{Rf}
S_\Lambda^{(\alpha)}= \mathrm{Tr} \ h_{\alpha}(P_\Lambda), \; \alpha \in
(0,\infty).
\end{equation}
where (cf. (\ref{h}))
\begin{equation}  \label{hal}
h_{\alpha}= (1-\alpha)^{-1} \log (x^\alpha + (1-x)^\alpha), \; x \in [0,1].
\end{equation}
In particular,
\begin{equation}  \label{hain}
h_{1}(x)=h(x), \;\; h_\infty(x)= - \log(\max\{1-x,x\}).
\end{equation}
For the details of the proof see Remark \ref{r:weyl} (ii) below.
\end{remark}

\textbf{Proof of result}. It follows from (\ref{alm}) for $d=1$ (or from (%
\ref{sp1}) -- (\ref{spm})) that
\begin{eqnarray}  \label{fim}
\mathbf{E}\{S_{[-M,M]}\} &=&\mathbf{E}\{S_{+}(T^{M}\omega
)+S_{-}(T^{-M}\omega )\}+o(1) \\
&=&2\mathbf{E}\{S_{-}(\omega )\}+o(1),\;M\rightarrow \infty ,  \notag
\end{eqnarray}%
and in obtaining the second equality we used the shift and the reflection
invariance of the probability distribution of the infinite sequence $%
V=\{V(x)\}_{x\in \mathbb{Z}}$ of i.i.d. random variables, see \cite{El-Co:17}.

Likewise, repeating almost literally the proof of (\ref{alm}) for $d=1$ in
\cite{El-Co:17}, we obtain%
\begin{eqnarray}
\mathbf{E}\{S_{[-M,M]}^{2}\} &=&\mathbf{E}\{(S_{+}(T^{M}\omega
)+S_{-}(T^{-M}\omega ))^{2}\}+o(1)  \label{es2} \\
&=&2\mathbf{E}\{(S_{-}(\omega ))^{2}\}+2\mathbf{E}\{S_{+}(T^{2M}\omega
)S_{-}(\omega )\}+o(1),\;M\rightarrow \infty .  \notag
\end{eqnarray}%
Since the infinite sequence $V=\{V(x)\}_{x\in \mathbb{Z}}$ of i.i.d. random
variables is a mixing stationary process (see e.g. \cite{Sh:95}, Section V.2
), we have%
\begin{eqnarray}  \label{ess}
\mathbf{E}\{S_{+}(T^{2M}\omega )S_{-}(\omega )\} &=&\mathbf{E}\{S_{+}(\omega
)\}\mathbf{E}\{S_{-}(\omega )\}+o(1)  \label{cs2} \\
&=&\mathbf{E}^{2}\{S_{-}(\omega )\}+o(1),\;M\rightarrow \infty .  \notag
\end{eqnarray}%
Combining (\ref{fim}) -- (\ref{ess}), we obtain%
\begin{equation}
\mathbf{Var}\{S_{[-M,M]}\}=2\mathbf{Var}\{S_{-}\}+o(1),\;M\rightarrow \infty
.  \label{v2v}
\end{equation}%
Hence, we have for the variance of the limiting entanglement entropy $S_{%
\mathbb{Z}}$ in \ (\ref{fim})%
\begin{equation}\label{szsm}
\mathbf{Var}\{S_{\mathbb{Z}}\}=2\mathbf{Var}\{S_{-}\}
\end{equation}
and it suffices to show that $\mathbf{Var}\{S_{-}\}$ is strictly positive.

To this end we start with the inequality
\begin{equation*}
\mathbf{Var}\{\varphi (\xi ,\eta )\}\geq \mathbf{Var}\{\mathbf{E}\{\varphi
(\xi ,\eta )|\xi \}\}
\end{equation*}%
involving the conditional expectation and valid for any random
(multi-component in general) variables $\xi $ and $\eta $ and a function $%
\varphi $. Choosing here $\xi =V(0)$, $\eta =\{V(x)\}_{x\neq 0}$ and $%
\varphi =S_{-}$, we obtain%
\begin{equation*}
\mathbf{Var}\{S_{-}\}\geq \mathbf{Var}\{\mathbf{E}\{S_{-}|V(0)\}\}.
\end{equation*}%
Next, we will use Lemma \ref{l:cr} with $\xi =V(0)$ and $\varphi (\xi )=$ $%
\mathbf{E}\{S_{-}|V(0)=\xi \}$ yielding%
\begin{equation}
\mathbf{Var}\{S_{-}\}\geq \left. (\mathbf{E}\{S_{-}\}-\mathbf{E}%
\{S_{-}^{t}\})^{2}\right/ F(t),  \label{varl}
\end{equation}%
where $S_{-}^{t}$ is the entanglement entropy (\ref{ee}) -- (\ref{fpl})
corresponding to the Schr\"{o}dinger operator $H^{t}$ (see (\ref{SD}) -- (%
\ref{pot})) in which the potential $V(0)$ at the origin is replaced by $%
V(0)+t, \; t>0$%
\begin{equation}
H^{t}=H|_{V(0)\rightarrow V(0)+t}.  \label{ht}
\end{equation}%
Combining \ (\ref{v2v}) -- (\ref{varl}), we obtain for any $t>0$%
\begin{equation}
\mathbf{Var}\{S_{\mathbb{Z}}\}:=\lim_{M\rightarrow \infty }\mathbf{Var}\{S_{[-M,M]}\}\geq 2(\mathbf{E}%
\{S_{-}\}-\mathbf{E}\{S_{-}^{t}\})^{2}/F(t),  \label{vart}
\end{equation}%
where $F(t)$ is defined in (\ref{F}).

\bigskip

We will prove below that%
\begin{equation}
\lim_{t\rightarrow \infty }\mathbf{E}\{S_{-}^{t}\}=0.  \label{leg}
\end{equation}%
Thus, there exists $\varepsilon (t)\rightarrow 0,\;t\rightarrow \infty $
(see, e.g (\ref{rem})) and sufficiently large $t_{0}$ such that we have in
view of (\ref{spm})%
\begin{equation}
\mathbf{E}\{S_{-}\}-\mathbf{E}\{S_{-}^{t}\})=(1-\varepsilon (t))\mathbf{E}%
\{S_{-}\},\;\;  \label{eeps}
\end{equation}%
and then (\ref{v2v}) -- (\ref{varl}) yield (\ref{var}) -- (\ref{AF}) upon
choosing sufficiently large $M_{0}$ and $t_{0}$ and assuming that $M\geq
M_{0}$ and $t\geq t_{0}$ to provide sufficiently small error terms in (\ref%
{fim}) -- (\ref{v2v}) and (\ref{eeps}).

Let us prove (\ref{leg}). Since the potential is a collection of i.i.d.
random variables satisfying condition (\ref{cl2}), the spectrum of $H$ is
the positive semi-axis (see Corollary 4.23 in \cite{Pa-Fi:92}). The same is
true for the spectrum of $H^{t}$ of (\ref{ht}) since $t>0$. Hence, we have in view of (\ref{fp})
\begin{equation}
P=\mathcal{E}_{H}((0,E)),\quad P^{t}=\mathcal{E}_{H^{t}}((0,E)),\quad E>0.
\label{PE}
\end{equation}%
It follows  from the proof of Lemma 4.5 of \cite{El-Co:17} that we have for some $t$%
-independent $C_0 < \infty$ and any $\alpha \in (0,1)$:%
\begin{align}
\mathbf{E}\{S_{-}^t\})& \leq C_{0}\sum_{x=0}^{\infty }\left( \sum_{y=-\infty
}^{-1}\mathbf{E}\{|P^{t}(x,y)|^{2}\}\right) ^{\alpha }  \label{espt} \\
& \leq C_{0}\mathbf{E}^{\alpha }\{P^{t}(0,0)\}+C_{0}\sum_{x=1}^{\infty
}\left( \sum_{y=-\infty }^{-1}\mathbf{E}\{|P^{t}(x,y)|^{2}\}\right) ^{\alpha
},  \notag
\end{align}%
where we took into account the inequality $(a+b)^{\alpha }\leq a^{\alpha
}+b^{\alpha },\;\alpha \in (0,1)$ and that $P^{t}$ is an orthogonal
projection, hence
\begin{equation*}
\sum_{y=-\infty }^{-1}|P^{t}(0,y)|^{2}\leq \sum_{y=-\infty }^{\infty
}|P^{t}(0,y)|^{2        }=P^{t}(0,0).
\end{equation*}%
Now, (\ref{espt}) and Lemma \ref{l:pvan} below yield
\begin{equation}  \label{rem}
\mathbf{E}\{S^t_{-}\} \leq C_{\alpha,s}/ (t-E)^{\alpha s},
\end{equation}
where $C_{\alpha, s}$ does not depend on $t$. This implies (\ref{leg}) $%
\blacksquare$.

\section{Numerical results}

Recall that we consider the free disordered fermions whose one-body Hamiltonian is
the Schr\"odinger operator (\ref{SD}) -- (\ref{pot}) with random
i.i.d. potential for $d=1$. We will use three particular random potentials with the
following on-site probability density $f$ where $\delta $ is the disorder
parameter which will be varied below:

(i) uniform%
\begin{equation}
f(v)=\frac{1}{\delta }\left\{
\begin{array}{cc}
1, & v\in \lbrack 0,\delta ], \\
0, & v\notin \lbrack 0,\delta ];%
\end{array}%
\right.  \label{un}
\end{equation}

(ii) exponential%
\begin{equation}
f(v)=\frac{1}{\delta }\left\{
\begin{array}{cc}
e^{-v/\delta }, & v\geq 0, \\
0, & v<0;%
\end{array}%
\right.  \label{ex}
\end{equation}

(iii) "half"-Cauchy%
\begin{equation}
f(v)=\frac{2\delta }{\pi }\left\{
\begin{array}{cc}
(v^{2}+\delta^2 )^{-1}, & v\geq 0, \\
0, & v<0.%
\end{array}
\right.
\label{ca}
\end{equation}

\begin{figure}[ht]
\begin{minipage}[ht]{0.5\linewidth}
\center{\includegraphics[width=0.99\linewidth]{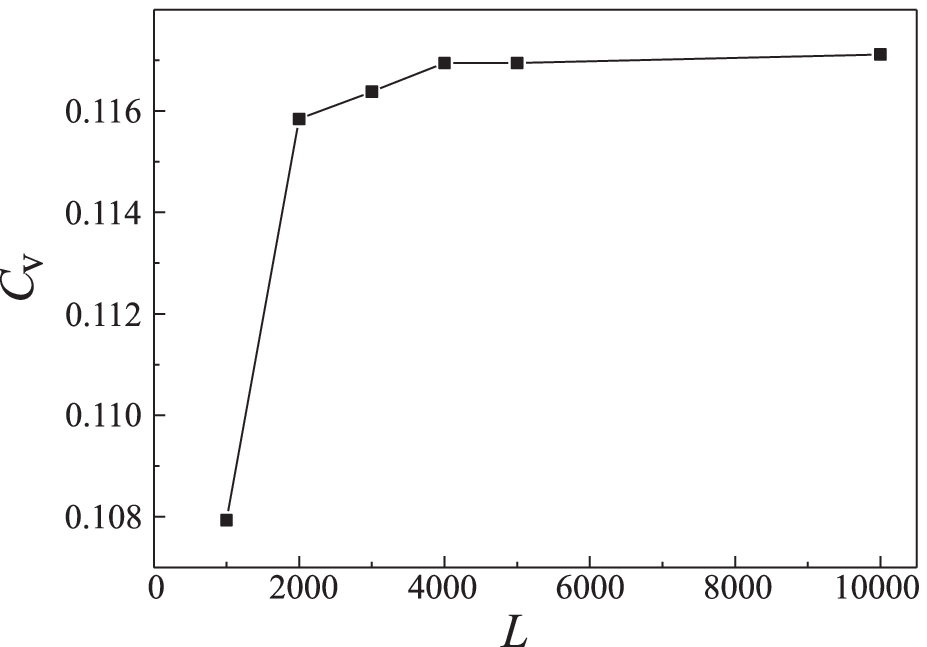} \\a)}
\end{minipage}
\hfill
\begin{minipage}[ht]{0.5\linewidth}
\center{\includegraphics[width=0.99\linewidth]{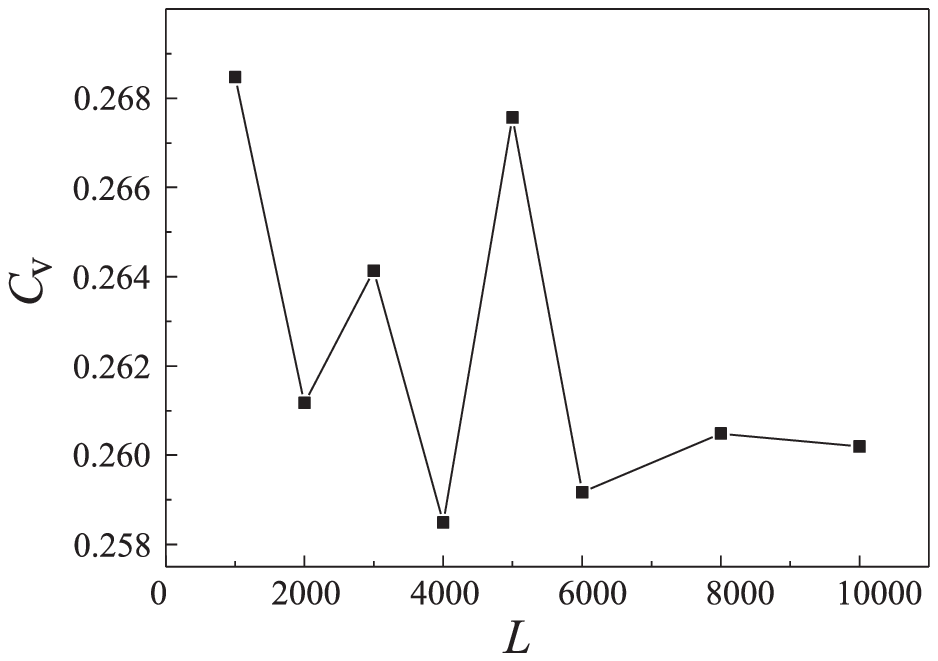} \\b)}
\end{minipage}
\caption{The coefficient of variation (relative standard deviation)
$\mathbf{C_V} \{S_{[-M,M]}\}$, see (\ref{cv}), as a function of  the block length $L=2M+1$
for: a) uniform distribution (\ref{un}), $\delta=1$; b) exponential distribution (\ref{ex}), $\delta=1$.}
\label{fig11}
\end{figure}

Note that the uniform distribution does not satisfies condition (i) of our
analytical result(\ref{var}) -- (\ref{AF}).

Recall also that we follow in this paper the widely
accepted asymptotic regime of the bipartite setting of quantum information
theory according to which
the size $N$ of the whole macroscopic system and the size $L=2M+1$ of the block $\Lambda=[-M,M]$
are related as
\begin{equation}
1<<L<<N.  \label{bitp}
\end{equation}

A number of formulas of Sections 1 and 2 are written for infinite
system, i.e., for the implementation $N=\infty$ of the r.h.s. inequality
(\ref{bitp}), see, e.g. (\ref{sz}), (\ref{spm}) (\ref{var}), (\ref{varl}) and (\ref{vart}). In our numerical results below we deal with finite systems whose size (length)
is $N=$10000 (Fig. 1) and $N=$5000 (Fig. 2 -- Fig. 5). The block size
(length) $L$  is varied from 1 to 10000 in Fig. 1 and is $L=$%
2500 in Fig. 2 -- Fig. 5. The operation of the expectation with respect to
the realizations of the corresponding random potential (see (\ref{un}) -- (\ref{ca}))
is carried out via the arithmetic mean of numerical results obtained for 2000
realizations of the potential.

A natural length scale in problems involving the eigenstate localization due to the
disordered medium is the localization radius of eigenstates at the Fermi energy.
The radius (the inverse Lyapunov exponent) is, roughly, the inverse rate
$\gamma ^{-1}$ of decay of the bound (\ref{b_P}) for the Fermi projection
(\ref{fp}). It follows from Table. \ref{ta1} that the
localization radius at the Fermi energy $E=1$ and the values of the disorder
parameter $\delta$ of (\ref{un}) -- (\ref{ca}), which we use in our computations, is
small enough to guarantee sufficiently strong inequalities in (\ref{bitp})
except, possibly, the case of weak ($\delta =0.2$) disorder for the uniform
distribution (\ref{un}).

\begin{table}
\centering
\begin{tabular}{ |l|c|c|c|c|c| }\hline
\backslashbox{Distribution}{$\delta$}
            & 0.2  & 0.4 & 0.6 & 0.8 & 1.0 \\ \hline
Uniform     & 1650 & 360 & 155 & 75  & 41  \\ \hline
Exponential & 134  & 30  & 13  & 7   & 5   \\ \hline
Half-Cauchy & 72   & 17  & 7   & 4   & 3   \\ \hline
\end{tabular}
\caption{The localization radius at the Fermi energy $E=1$
for three random potentials given by the uniform (\ref{un}),
exponential (\ref{ex}) and "half"-Cauchy (\ref{ca}) distributions
and different values of disorder parameter $\delta$.}
\label{ta1}
\end{table}

Fig. 1 depicts the finite system ($N=10000$) versions of the coefficient of variation (relative standard deviation)
\begin{equation}\label{cv}
\mathbf{C_V}\{S_{[-M,M]}\}=(\mathbf{Var}\{S_{[-M,M]}\})^{1/2}/\mathbf{E}\{S_{[-M,M]}\}
\end{equation}
of the entanglement entropy (\ref{ee}) -- (\ref{fpl}) as a function of the
block length $L=2M+1$ for the uniform (\ref{un}) and the exponential (\ref{ex}) probability distributions of the on-site potential. Both plots exhibit the tendency to approach a non-zero value for large $L$. This shows, in particular, that the error terms in formulas (\ref{fim}), (\ref{es2}), (\ref{cs2}) and (\ref{v2v}) decay sufficiently fast as $L \to \infty$, although with certain finite size effects demonstrated in Fig. 1b).

Fig. 2a) contains the plots of the probability density of the entanglement entropy $S_{[-M,M]}$ for $N=5000$ and $L=2M+1=2500$ and the uniform (\ref{un}) on-site probability distribution of the potential (\ref{pot}). We see that the corresponding plots move monotonically to the origin as disorder grows. This seems a natural property, since the entanglement entropy, being a measure of quantum coherence, should decrease with the increase of the disorder which inhibits
the coherence (see also formula (\ref{edis}) below).

Furthermore, the plots corresponding to the weak and medium disorder ($\delta=0.2,\ 0.4, \ 1.0$) have similar bell-shaped form centered in a neighborhood of the expectation, while the plot for $\delta=2.0$ has an additional rather sharp local maximum at the origin. A similar behavior of the probability densities of the entanglement entropy is presented on Fig. \ref{fig3} for the exponential (\ref{ex}) and the "half"-Cauchy (\ref{ca}) on-site probability distributions of the potential. The bell- shaped forms could be a manifestation of finite size effects which seem likely in the case of the uniform distribution of Fig. 2a) and the weak disorder $\delta=0.2$ and, possibly,
$\delta=0.4$ where the localization radius at the Fermi energy is comparable with the block length according to Fig. 2b) and Table 1.  However, we have the bell-shaped  plots in Fig. 2a) for $\delta=0.7, \ 1.0$,  Fig. 3a) for $\delta=0.4, \ 0.7$ and Fig. 3b) for $\delta=0.2$ for which the localization radius is one or even two order of magnitude less than the length of block and the length of the whole system. This suggests a certain amount of universality of the bell-shaped form of the probability density of the entanglement entropy for the weak and medium disorder. Indeed, we found that all these plots fit sufficiently well the Gaussian probability density centered at the expectation $\mathbf{E}\{S_{[-M,M]}\}$ with the precision within 1\% - 4\%. Thus, one can guess a kind of intermediate asymptotic form of the probability
density given by a certain Central Limit Theorem.

\begin{figure}[ht]
\begin{minipage}[ht]{0.5\linewidth}
\center{\includegraphics[width=0.99\linewidth]{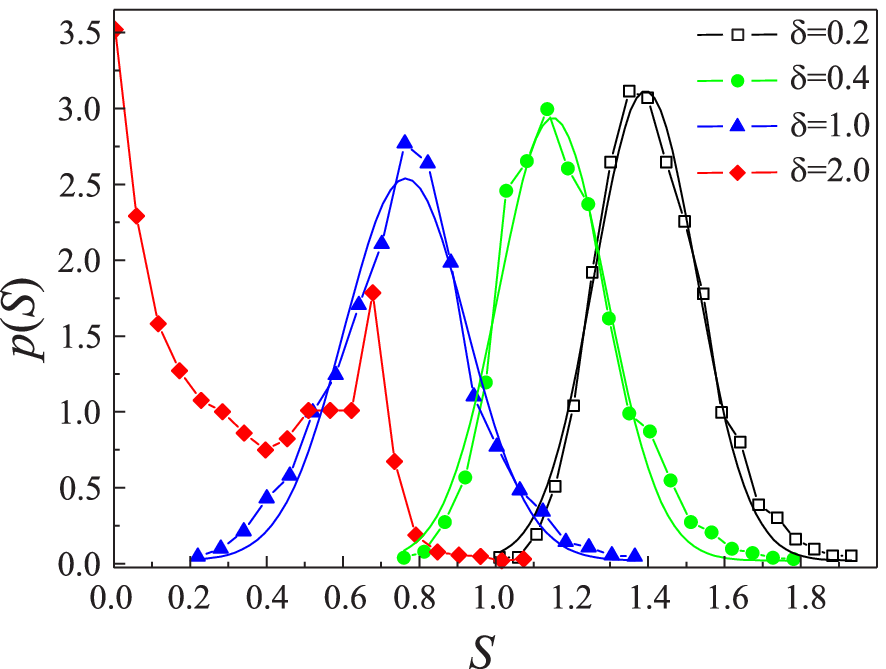} \\ a)}
\end{minipage}
\hfill
\begin{minipage}[ht]{0.5\linewidth}
\center{\includegraphics[width=0.99\linewidth]{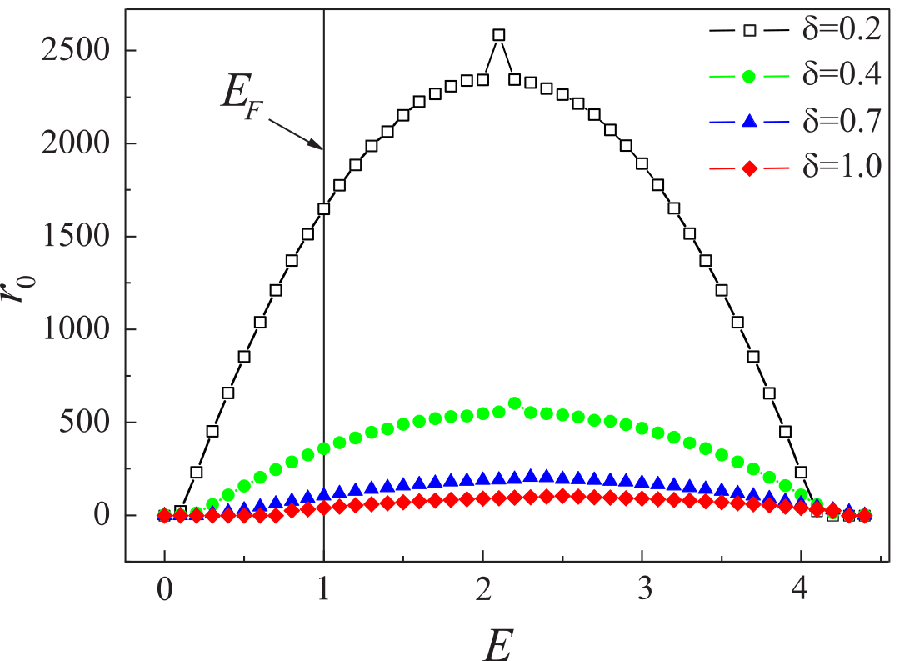} \\ b)}
\end{minipage}
\caption{a) The probability density of the
entanglement entropy $S_{[-M,M]}$ for different values
of disorder parameter $\delta$ and the uniform on-site distribution
(\ref{un}). b) The localization radius as a function of the
Fermi energy $E$. The  vertical line indicates the value $E=1$ of
the Fermi energy used in Fig. \ref{fig11}, Fig. \ref{fig2}a) and
Fig. \ref{fig3} -- Fig. \ref{fig5}. Solid lines without symbols are the Gaussian fittings.}
\label{fig2}
\end{figure}

Now about the maximum near the origin of the probability density of the entanglement entropy in Fig. 2a) for $\delta=2.0$, Fig. 3a) for $\delta=1.0$ and Fig. 3b) for $\delta=0.7, \ 1.0$. This is one more manifestation of the decay of the entanglement entropy as disorder grows. It follows from the results of \cite{Ai-Wa:15,El-Co:17,Pa-Sl:14} that
\begin{equation}\label{edis}
\mathbf{E}\{S_{[-M,M]}\} \le C /\delta^a,
\end{equation}
where $C < \infty$ and $a>0$ do not depend on $M$ and $\delta$. Since $S_{[-M,M]}$ is non-negative, the above bound and the Chebyshev inequality imply an analogous bound (with $a' < a$) for the probability of non-zero values of $S_{[-M,M]}$.

Fig. \ref{fig4} shows the probability densities of the R\'enyi entropy (\ref{Rg}) -- (\ref{hal}) for $\alpha \ge 1$.  Here the plots, being qualitatively similar to those of Fig. 2a and Fig. 3 for the von Neumann entropy with the same values of the disorder parameter, are closer to the origin, since the R\'enyi entropy decreases monotonically as $\alpha \ge 1$ increases.

Fig. \ref{fig5} is the graphic description of the finite size version of the bound (\ref{vart}), written as
\begin{equation}\label{cvs}
\mathbf{C_V}\{S_{[-M,M]}\} \ge \mathbf{C_V}(t)
\end{equation}
where  $\mathbf{C_V}\{S_{[-M,M]}\}$ is the coefficient of variation (\ref{cv}) of the entanglement entropy and
\begin{equation}\label{cvt}
\mathbf{C_V}(t)=\sqrt{2}\left|1-\mathbf{E}\{S_{[-M,M]}^{t}\}/\mathbf{E}\{S_{[-M,M]}\}\right|/F^{1/2}(t)
\end{equation}
is the square root of the r.h.s of the finite size version of (\ref{vart}) divided by $\mathbf{E}\{S_{[-M,M]}\}$.

\begin{figure}[ht]
\begin{minipage}[ht]{0.5\linewidth}
\center{\includegraphics[width=0.99\linewidth]{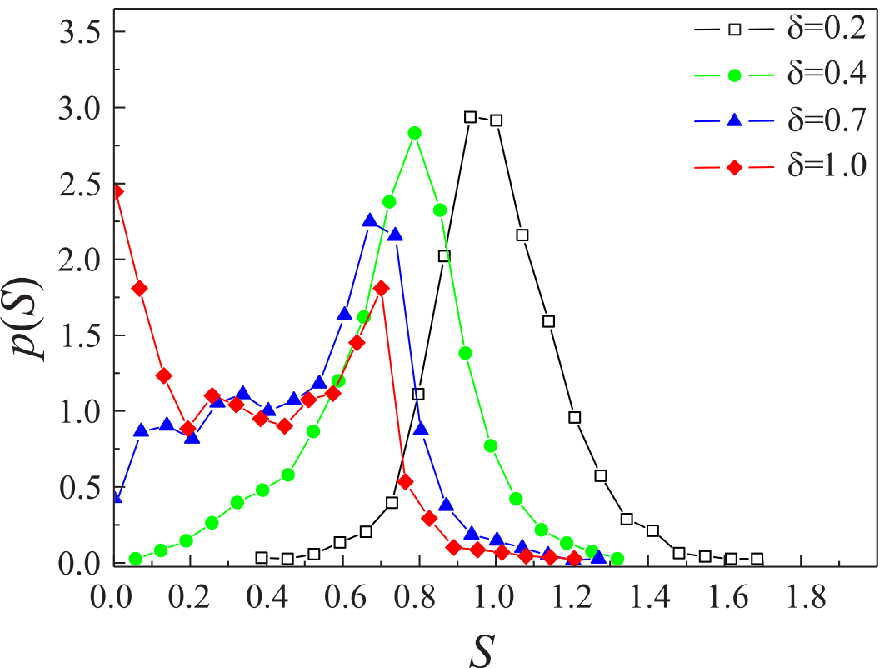} \\a)}
\end{minipage}
\hfill
\begin{minipage}[ht]{0.5\linewidth}
\center{\includegraphics[width=0.99\linewidth]{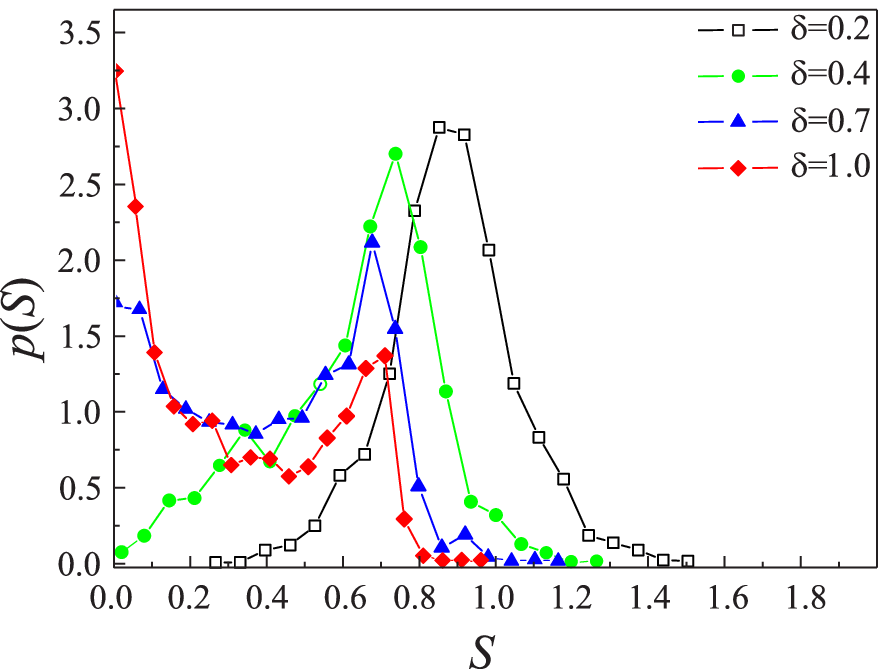} \\ b)}
\end{minipage}
\caption{The probability density of the entanglement entropy $S_{[-M,M]}$
with different disorder parameters $\protect\delta$ for: a) exponential
distribution (\ref{ex}); b) "half"-Cauchy distribution (\ref{ca})}
\label{fig3}
\end{figure}

It follows from the figure that our lower bounds (\ref{var}) -- (\ref{F}) and (\ref{vart}) are sufficiently exact although not tight (the maxima of the both plots are about 92 percents of the of the corresponding coefficient of variation, indicated by the corresponding horizontal lines).

\begin{figure}[ht]
\begin{minipage}[ht]{0.5\linewidth}
\center{\includegraphics[width=0.99\linewidth]{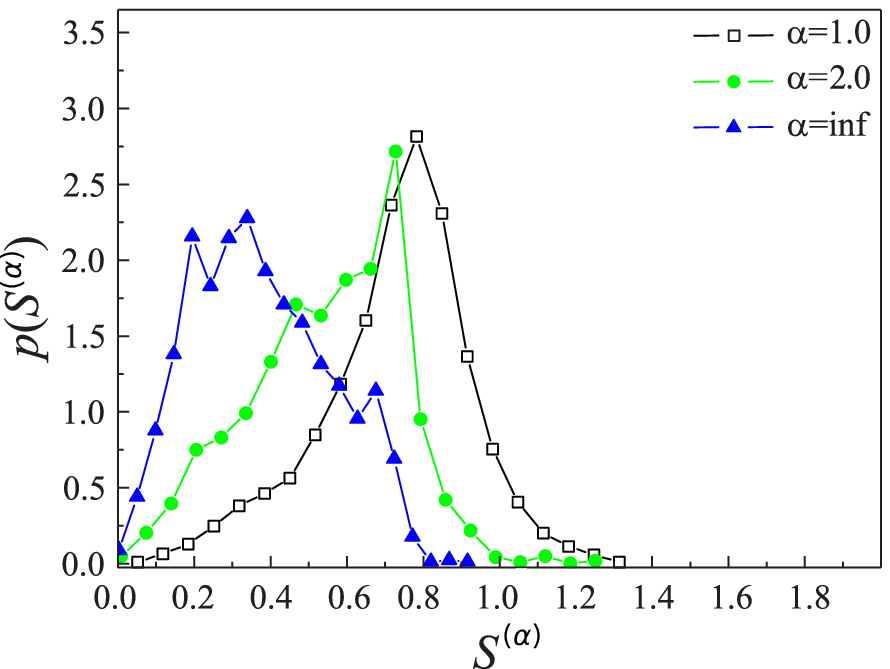} \\a)}
\end{minipage}
\hfill
\begin{minipage}[ht]{0.5\linewidth}
\center{\includegraphics[width=0.99\linewidth]{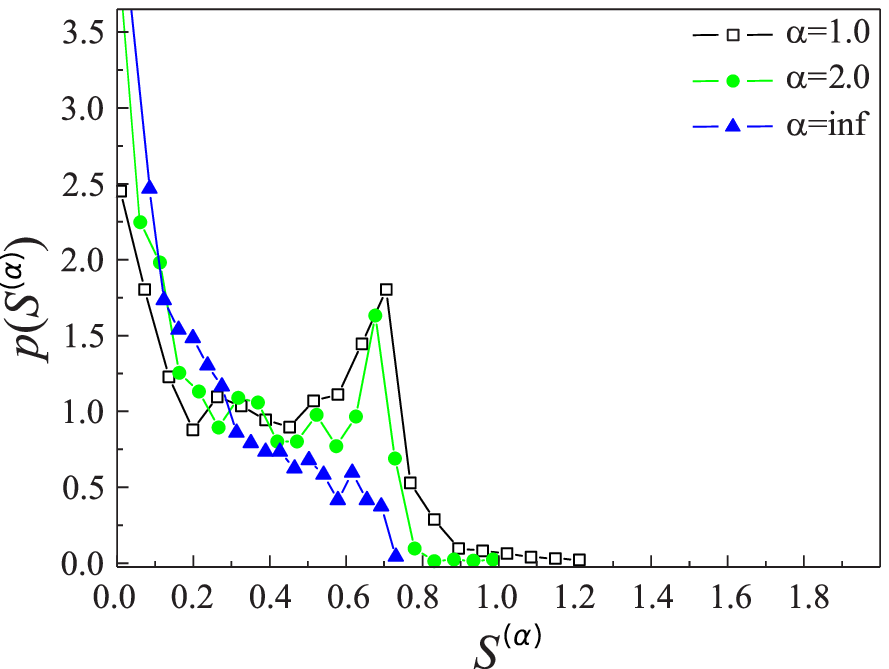} \\ b)}
\end{minipage}
\caption{The probability densities  of the R\'enyi entropy $ S^{(\alpha)}_{[-M,M]}$ of (\ref{Rf}) -- (\ref{hal}) for the exponential on-site distribution (\ref{ex}) with
different values of  parameter $\alpha$ for:
a)  $\delta=0.4$; b)  $\delta=1.0$.}
\label{fig4}
\end{figure}

\section{Auxiliary results}

\begin{lemma}
\label{l:cr} Let $\xi $ be a non-negative random variable, $\varphi :\mathbb{%
R}_{+}\rightarrow \mathbb{R}$ be a function and $\mathbf{E}\{|\varphi (\xi
)|\}<\infty $. Assume that the probability law of $\xi $ has a bounded
density $f$ such that

(i) $\mathrm{supp}\,f=[0,\infty );$

(ii) the quantity $F(t)$ in (\ref{F}) 
is well defined for some $t>0$.

\noindent Then we have%
\begin{equation}
\mathbf{Var}\{\varphi (\xi )\}\geq \left. (\mathbf{E}\{\varphi (\xi
)-\varphi (\xi+ t )\})^2\right/ F(t).  \label{cr}
\end{equation}
\end{lemma}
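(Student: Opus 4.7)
The plan is to prove Lemma \ref{l:cr} by a weighted Cauchy--Schwarz argument applied to the signed measure obtained from the difference of the densities $f(v)$ and $f(v-t)$, following the general philosophy of covariance-type inequalities (e.g.\ Chernoff/Brascamp--Lieb-type bounds) that control the variance of $\varphi(\xi)$ in terms of a discrepancy between $\varphi$ and a shift of itself.

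First I would rewrite $\mathbf{E}\{\varphi(\xi+t)\}$ as an integral against the shifted density. Since $\mathrm{supp}\,f=[0,\infty)$,
\begin{equation*}
\mathbf{E}\{\varphi(\xi+t)\}=\int_0^\infty \varphi(v+t)f(v)\,dv=\int_t^\infty \varphi(v)f(v-t)\,dv=\int_0^\infty \varphi(v)\,g_t(v)\,dv,
\end{equation*}
with the effective shifted density $g_t(v):=f(v-t)\mathbf{1}_{[t,\infty)}(v)$. Setting $\Delta(v):=f(v)-g_t(v)$, this gives
\begin{equation*}
\mathbf{E}\{\varphi(\xi)-\varphi(\xi+t)\}=\int_0^\infty \varphi(v)\,\Delta(v)\,dv.
\end{equation*}
The crucial observation is that $\int_0^\infty \Delta(v)\,dv=1-1=0$, so for any constant $c$ the quantity $\varphi$ may be replaced by $\varphi-c$ without changing the integral.

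Next I would apply the Cauchy--Schwarz inequality with the weights $\sqrt{f(v)}$ and $1/\sqrt{f(v)}$:
\begin{equation*}
\Bigl(\int_0^\infty (\varphi(v)-c)\,\Delta(v)\,dv\Bigr)^{2}\le\Bigl(\int_0^\infty (\varphi(v)-c)^{2}f(v)\,dv\Bigr)\Bigl(\int_0^\infty \frac{\Delta^{2}(v)}{f(v)}\,dv\Bigr).
\end{equation*}
Choosing $c=\mathbf{E}\{\varphi(\xi)\}$ makes the first factor on the right equal to $\mathbf{Var}\{\varphi(\xi)\}$. It remains to identify the second factor with $F(t)$.

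A direct calculation splits the integral at $v=t$: on $[0,t)$ one has $\Delta=f$, contributing $\int_0^t f(v)\,dv$; on $[t,\infty)$ one expands
\begin{equation*}
\frac{(f(v)-f(v-t))^{2}}{f(v)}=f(v)-2f(v-t)+\frac{f^{2}(v-t)}{f(v)},
\end{equation*}
so the sum telescopes to $1-2+J(t)=F(t)$, using $\int_0^\infty f=1$, $\int_t^\infty f(v-t)\,dv=1$, and the definition (\ref{F}) of $J(t)$. Substituting and rearranging yields (\ref{cr}). As a byproduct this also establishes $F(t)\geq 0$ (take $\varphi$ constant, or simply note that the integrand in the second factor is nonnegative), confirming the claim $F\geq 0$ used in Remark \ref{r:f}.

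There is no substantial obstacle here: the whole argument is a one-line Cauchy--Schwarz once the right decomposition is identified. The only points requiring care are the hypothesis $\mathrm{supp}\,f=[0,\infty)$, which guarantees that the ratio $\Delta^{2}/f$ is finite away from zeros of $f$, and the hypothesis that $F(t)$ is finite, which is exactly what makes the Cauchy--Schwarz bound meaningful (otherwise the inequality is vacuously true). The finiteness of $\mathbf{E}\{|\varphi(\xi)|\}$ ensures all the left-hand integrals are well-defined.
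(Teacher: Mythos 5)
Your proof is correct and is essentially the paper's own argument: your weighted Cauchy--Schwarz against $\Delta=f-f(\cdot-t)\mathbf{1}_{[t,\infty)}$ with the centering $c=\mathbf{E}\{\varphi(\xi)\}$ is exactly the paper's covariance--Schwarz (Hammersley--Chapman--Robbins) step with $\psi(\xi)=(f(\xi-t)-f(\xi))/f(\xi)$, and your computation $\int_0^\infty\Delta^{2}/f=J(t)-1=F(t)$ coincides with the paper's identification $\mathbf{Var}\{\psi(\xi)\}=F(t)$. No gaps worth noting.
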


\begin{figure}[ht]
\center{\includegraphics[width=8.0cm]{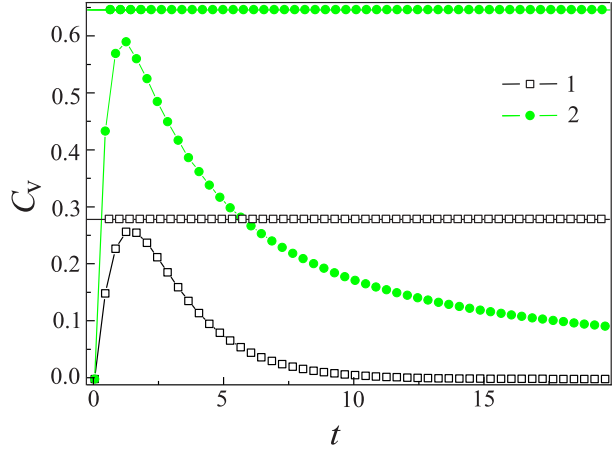}}
\caption{The r.h.s.   $\mathbf{C_{V}}(t)$ of (\ref{cvs}) (see also (\ref{cvt})).
Curve 1 corresponds to exponential on-site distribution (\ref{un}) with $\delta=0.4$
and
curve 2 corresponds to half-Cauchy on-site distribution (\ref{ca}) with $\delta=0.7$.
Horizontal lines indicate the corresponding values of the l.h.s.
$\mathbf{C_{V}}\{S_{[-M,M]}\}$ of (\ref{cv}), the coefficient of variation
of the entanglement entropy.}
\label{fig5}
\end{figure}

\begin{proof}
Consider the random variables $\varphi (\xi )$ and $\psi (\xi )=(f(\xi
-t)-f(\xi ))/f(\xi )$. It follows from the normalization condition%
\begin{equation}  \label{nor}
\int_{0}^{\infty }f(x)dx=1
\end{equation}%
that
\begin{equation}\label{epsi}
\mathbf{E}\{\psi (\xi )\}=\int_{0}^{\infty }\left( \frac{f(x-t)-f(x)}{f(x)}%
\right) f(x)dx=0.
\end{equation}%
Thus,%
\begin{eqnarray*}
\mathbf{Cov}\{\varphi (\xi )\psi (\xi )\} :=& \mathbf{E}\{(\varphi (\xi )-
\mathbf{E}\{\varphi (\xi )\})(\psi (\xi )-\mathbf{E}\{\varphi (\xi )\})\} \\
&\hspace{2cm}=\mathbf{E}\{\varphi (\xi )\psi (\xi )\}.
\end{eqnarray*}%
On the other hand, we have from the Schwarz inequality for the expectations:%
\begin{align*}
(\mathbf{Cov}\{\varphi (\xi )\psi (\xi )\})^{2}\leq \mathbf{Var}\{\varphi
(\xi )\}\mathbf{Var}\{\psi (\xi )\}.
\end{align*}%
Combining these two relations and using the definition of $\psi $ implying (\ref{epsi}) and%
\begin{eqnarray}
\mathbf{Var}\{\psi (\xi )\} &=&\int_{0}^{\infty }\left( \frac{f(x-t)-f(x)}{%
f(v)}\right) ^{2}f(x)dx  \label{Fpos} \\
&=&\int_{0}^{\infty }\frac{f^{2}(x-t)}{f(x)}dx-1=F(t)\geq 0,  \notag
\end{eqnarray}%
we get (\ref{cr}).
\end{proof}

\begin{remark}
\label{r:cr} The inequality is, in fact, the Hammersley-Chapman-Robbins
inequality (see \cite{Le-Ca:98}, Section 2.5.1) and is a version of the
Cram\'er-Rao inequality of statistics.
\end{remark}

\begin{lemma}
\label{l:pvan} Let $H$ be the one-dimensional Schr\"{o}dinger operator (see (%
\ref{SD}) -- (\ref{pot}) for $d=1$) with an i.i.d. non-negative random
potential $V=\{V(x)\}_{x\in \mathbb{Z}}$ whose common probability law has a
density $f$ satisfying (\ref{cl2}). Denote $P^{t}=\mathcal{E}%
_{H^{t}}((0,E))=\{P^{t}(x,y)\}_{x,y\in \mathbb{Z}}$ the Fermi projection of $%
H^{t}$ of (\ref{ht}) corresponding to the spectral interval $(0,E)$.

Then there exist $s\in (0,1),\;C<\infty $ and $\gamma >0$ that do not depend
on $t > 0$ and are such that we have for $0<E<t$:

\medskip (i) \ $\mathbf{E}\{|P^{t}(0,0)|\}\leq C(t-E)^{-s}$;

\medskip (ii) $\mathbf{E}\{|P^{t}(x,y)|\}\leq C(t-E)^{-s}e^{-\gamma (x-y)},$
for all $x\geq 1$ and $y\leq -1$.
\end{lemma}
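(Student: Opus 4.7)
The plan is to handle (i) by a direct deterministic argument based on the Schr\"odinger recursion at the origin, and (ii) by the Aizenman--Molchanov fractional moment method combined with a one-dimensional factorization of the Green's function that isolates the dependence on the random variable $V(0)$.

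For (i), the Schr\"odinger equation at site $0$ applied to each normalized eigenfunction $\psi_n$ of $H^t$ with eigenvalue $\lambda_n$ yields
\[
\psi_n(0) \;=\; \frac{\psi_n(-1)+\psi_n(1)}{\lambda_n - V(0) - t + 2}.
\]
Since $V(0)\ge 0$, for $\lambda_n \in (0,E)$ and $t > E+2$ the denominator has absolute value at least $t-E-2$. Squaring, summing over $n$ with $\lambda_n\in(0,E)$, and using $P^t(\pm 1,\pm 1)\le 1$ yields the deterministic pointwise bound $P^t(0,0)\le 4/(t-E-2)^2$. This implies (i) with any $s\le 2$, and in particular with the value $s\in(0,1)$ fixed in (ii).

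For (ii), the key identity, valid for $x\ge 1$ and $y\le -1$ (obtained by combining the rank-one resolvent formula with the 1D factorization $G(z;x,y)=G(z;x,0)G(z;0,0)^{-1}G(z;0,y)$), is
\[
G^t(z;x,y) \;=\; \frac{\rho_+(z;x)\,\rho_-(z;y)}{V(0) + t - z - K(z)}.
\]
Here $\rho_\pm(z;w)=\psi_\pm(z;w)/\psi_\pm(z;0)$ are ratios of the Weyl solutions decaying at $\pm\infty$, and $K(z) = V(0)-z - 1/G(z;0,0)$. Crucially, $\rho_+$ and the half-line contribution $m_+(z)$ to $K(z)$ depend only on $\{V(k): k\ge 1\}$, while $\rho_-$ and $m_-(z)$ depend only on $\{V(k): k\le -1\}$; hence they are independent of $V(0)$ and of each other. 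Taking the $s$-th moment ($s\in(0,1)$ chosen small) conditionally on $\{V(k): k\neq 0\}$ gives
\[
\mathbf{E}_{V(0)}\{|G^t(z;x,y)|^s\} \;=\; |\rho_+(x)|^s\,|\rho_-(y)|^s \int_0^\infty \frac{f(v)\,dv}{|v+t-z-K(z)|^s}.
\]
For $z=\lambda+i0$ with $\lambda\in(0,E)$, on the likely event $\{\mathrm{Re}\,K\le t-E-1\}$ the non-negativity $V(0)\ge 0$ together with the moment hypothesis (\ref{cl2}) bounds the last integral by $C(t-E-\mathrm{Re}\,K)^{-s}$; on the complementary rare event (controlled by the absolute continuity of the distribution of the Weyl $m$-function for 1D i.i.d. potentials with bounded density) one falls back on $|v+t-z-K|\ge \mathrm{Im}\,K$. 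Factoring the outer expectation via the independence of $(\rho_+,m_+)$ and $(\rho_-,m_-)$ and applying the half-line Aizenman--Molchanov bounds $\mathbf{E}\{|\rho_\pm(w)|^s\}\le C\,e^{-\gamma|w|}$, one arrives at
\[
\mathbf{E}\{|G^t(\lambda+i0;x,y)|^s\} \;\le\; C_1\,(t-E)^{-s}\,e^{-\gamma(x-y)}
\]
uniformly in $\lambda\in(0,E)$. The bound (ii) follows from this fractional-moment estimate by the standard Aizenman--Molchanov conversion to the spectral projection (contour integration of the resolvent around $(0,E)$ combined with the a priori bound $|P^t(x,y)|\le 1$).

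The main technical obstacle lies in the outer expectation step in (ii): the rare event on which $\mathrm{Re}\,K(\lambda+i0)$ is comparable to $t-E$ must be handled by quantitative estimates on the density and tail of the distribution of the Weyl $m$-function of a one-dimensional random Schr\"odinger operator with bounded-density potential, ultimately relying on the moment hypothesis (\ref{cl2}).
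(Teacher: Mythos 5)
Your part (i) is correct and genuinely more elementary than the paper's route: the paper obtains (i) from a fractional-moment bound on $G^{t}(0,0;\lambda+i\eta)$ plus the resolvent-to-projection conversion, whereas your eigenfunction-expansion argument gives an almost-sure deterministic bound $P^{t}(0,0)\leq 4/(t-E-2)^{2}$ (using a.s. pure point spectrum; the range $E<t\leq E+2$ is covered trivially by $P^{t}(0,0)\leq 1$). The problems are in part (ii), exactly at the step you yourself flag. First, your fallback on the rare event $\{\mathrm{Re}\,K>t-E-1\}$, namely $|v+t-z-K|\geq \mathrm{Im}\,K$, is vacuous: at $z=\lambda+i0$ with $\lambda$ in the (a.s. pure point) spectrum the half-line $m$-functions are real for a.e. $\lambda$, so $\mathrm{Im}\,K(\lambda+i0)=0$, and for $z=\lambda+i\eta$ the quantity $\mathrm{Im}\,K$ degenerates as $\eta\to 0$, while the contour conversion to $P^{t}$ requires bounds uniform in $\eta\neq 0$. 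Second, on your ``likely'' event the bound $(t-E-\mathrm{Re}\,K)^{-s}$ is only $O(1)$ near the threshold, so it does not by itself yield the factor $(t-E)^{-s}$; one needs a threshold of order $(t-E)/2$, a quantitative tail bound $\mathbf{P}\{\mathrm{Re}\,K>(t-E)/2\}\lesssim (t-E)^{-s'}$ (obtainable from a priori fractional moments of $G_{\pm}(\pm1,\pm1)$), the uniform-in-$w$ bound $\int_{0}^{\infty}f(v)|v-w|^{-s}dv\leq C_{s}(\|f\|_{\infty}+1)$ on the complementary event, and a H\"older step to decouple that event from $|\rho_{+}|^{s}|\rho_{-}|^{s}$ (note $\rho_{+}$ and $m_{+}$ are \emph{not} independent -- both are functions of $\{V(k):k\geq1\}$). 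Third, the estimates $\mathbf{E}\{|\rho_{\pm}(w)|^{s}\}\leq Ce^{-\gamma|w|}$ which you cite as ``half-line Aizenman--Molchanov bounds'' are not the standard ones: Minami/Aizenman--Molchanov give exponential decay of $\mathbf{E}\{|G_{\pm}(x,\pm1)|^{s}\}$, not of ratios of Weyl solutions; to control $\rho_{+}(x)=\psi_{+}(x)/\psi_{+}(0)$ one must handle the possibly small denominator (equivalently negative fractional moments of edge Green's functions), which needs the moment hypothesis (\ref{cl2}) and an extra H\"older argument.

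For comparison, the paper sidesteps all of these issues: via the decoupling identities (\ref{gt0y})--(\ref{gt2}) only the factor $(V(0)+2+t-z)^{-1}$ appears in the denominator, and since $V(0)\geq 0$ and $\lambda\leq E<t$ this factor is \emph{deterministically} of modulus at most $(t-E)^{-1}$, so no rare event ever arises; the accompanying full-line entries $G^{t}(\pm1,y)$ are controlled by the a priori fractional-moment bound (Theorem 8.7 of Aizenman--Warzel, uniform in $z$), the exponential decay is imported from Minami's half-line bounds (\ref{min1})--(\ref{min2}), and the passage to $P^{t}$ is the cited contour/Combes--Thomas argument. So your strategy is the same in spirit (decouple at the origin, average out $V(0)$ to extract $(t-E)^{-s}$, use half-line decay, convert to the projection), but as written the rare-event treatment and the $\rho_{\pm}$ moment bounds are genuine gaps; they can be repaired, but the repair essentially reorganizes the estimate into the form the paper uses.
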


\begin{proof}
Let%
\begin{equation*}
G^{t}(z)=(H^{t}-z)^{-1}=\{G^{t}(x,y;z)\}_{x,y\in \mathbb{Z}},\;z\in \mathbb{C%
}\setminus \mathbb{R}
\end{equation*}%
be the resolvent of $H^{t}$. It is shown below that the bounds%
\begin{equation}
\mathbf{E}\{|G^{t}(0,0;\lambda +i\eta )|^{s}\}\leq C/(t-E)^{s}  \label{eg00}
\end{equation}%
and%
\begin{equation}
\mathbf{E}\{|G^{t}(x,y;\lambda +i\eta )|^{s}\}\leq C/(t-E)^{s}e^{-\gamma
(x-y)}  \label{egxy}
\end{equation}%
are valid for some $s\in (0,1)$, all $\lambda \in (0,E),\;0 < E<t,\;\eta
\neq 0$ and $x\geq 1$, $y\leq -1$ with $C<\infty $ and $\gamma >0$ which are
independent of $t,\;\lambda $ and $\eta \neq 0$.

It follows from a slightly modified version of proof of Theorem 13.6 of \cite%
{Ai-Wa:15}, based on the contour integral representation of $P^t$ via $G^t$
and the Combes-Thomas theorem, that the assertion of the lemma can be
deduced from (\ref{eg00}) -- (\ref{egxy}).

Hence, it suffices to prove (\ref{eg00}) and (\ref{egxy}). To this end we
introduce the restrictions $H_{-}$ and $H_{+}$ of $H^{t}$ (or $H$) to the
integer-valued intervals $(-\infty ,-1]$ and $[1,\infty )$ and the rank one
operator $\widehat{V}_{0}^{t}$ of multiplication by $V(0)+2+t$. Let%
\begin{equation}
\widehat{H}=H_{-}\oplus \widehat{V}_{0}^{t}\oplus H_{+}  \label{hh}
\end{equation}%
be the double infinite block matrix consisting of the $(-\infty ,-1]\times
(-\infty ,-1]$ semi-infinite block $H_{-}$, $1\times 1$ "central" block $%
\widehat{V}_{0}^{t}$ and the $[1,\infty )\times \lbrack 1,\infty )$
semi-infinite block $H_{+}$. In other words, $\widehat{H}^{t}$ is obtained
from $H^{t}$ by replacing its four entries (equal $-1)$ with indices $(0,\pm
1)$ and $(\pm 1,0)$ by zero. Denote
\begin{align}
\widehat{G}^{t}(z)& =(\widehat{H}^{t}-z)^{-1}=\{\widehat{G}%
^{t}(x,y)\}_{x,y\in \mathbb{Z}},  \label{gpm} \\
G_{\pm }(z)& =(H_{\pm }-z)^{-1}=\{G_{\pm }(x,y)\}_{x,y\in \lbrack \pm 1,\pm
\infty )}  \notag
\end{align}%
the corresponding resolvents, where we omit the complex spectral parameters $%
z,\;\Im z\neq 0$ in the r.h.s. We have in view of (\ref{hh}):%
\begin{equation}
\widehat{G}^{t}=G_{-}\oplus (\widehat{V}_{0}^{t}-z)^{-1}\oplus G_{+}.
\label{ght1}
\end{equation}%
By using the resolvent identity $G^{t}=\widehat{G}^{t}-\widehat{G}^{t}(H^{t}-%
\widehat{H}^{t})G^{t}$, we obtain for all $x,y\in \mathbb{Z}$%
\begin{eqnarray*}
G^{t}(x,y) &=&\widehat{G}^{t}(x,y)+\widehat{G}%
^{t}(x,0)(G^{t}(-1,y)+G^{t}(1,y)) \\
&&+(\widehat{G}^{t}(x,-1)+\widehat{G}^{t}(x,1))G^{t}(0,y).
\end{eqnarray*}%
This and (\ref{ght1}) imply%
\begin{equation}
G^{t}(0,y)=(V(0)+2+t-z)^{-1}(\delta _{0,y}+G^{t}(-1,y)+G^{t}(1,y)),\;y\leq 0
\label{gt0y}
\end{equation}%
and
\begin{equation}
G^{t}(x,y)=G^{t}(0,y)G_{+}(x,1),\;x\geq 1,\;y\leq -1.  \label{gt1}
\end{equation}%
Likewise, we have from the resolvent identity $G^{t}=\widehat{G}%
^{t}-G^{t}(H^{t}-\widehat{H}^{t})\widehat{G}^{t}$:%
\begin{equation}
G^{t}(x,0)=(V(0)+2+t-z)^{-1}(\delta _{x,0}+G^{t}(x,-1)+G^{t}(x,1)),\;x\geq 0
\label{gtx0}
\end{equation}%
and%
\begin{equation}
G^{t}(x,y)=G^{t}(x,0)G_{-}(-1,y),\;x\geq 1,y\leq -1.  \label{gt2}
\end{equation}%
We have then from (\ref{gt0y}) and the Schwarz inequality for any $s>0$ and
all $y\leq 0$
\begin{eqnarray}
\mathbf{E}\{|G^{t}(0,y)|^{s}\} &\leq &C_{s}\,(g(z-2-t))^{1/2}  \label{egt0y}
\\
&&(\delta _{0,y}+\mathbf{E}^{1/2}\{|G^{t}(-1,y)|^{2s}\}+\mathbf{E}%
^{1/2}\{|G^{t}(1,y)|^{2s}\}),  \notag
\end{eqnarray}%
where $C_{s}$ depends only on $s>0$ and
\begin{equation*}
g(\zeta )=\mathbf{E}\{|V(0)-\zeta |^{-2s}\}=\int_0^\infty \frac{f(v)dv}{%
|v-\zeta |^{2s}}.
\end{equation*}%
Choosing here $\zeta =z-2-t,\;z=\lambda +i\eta $ and using (\ref{nor}), we
have for $\lambda \leq E<t$
\begin{equation}
g(z-2-t)\leq (t-E)^{-2s}  \label{gb}
\end{equation}%
and then (\ref{egt0y}) implies for any $s>0$ and all $y\leq 0$
\begin{align}
& \mathbf{E}\{|G^{t}(0,y;z)|^{s}\}\leq C_{s}(t-E)^{-s}  \label{gt0ys} \\
& \hspace{0.5cm}\times (\delta _{0,y}+\mathbf{E}^{1/2}\{|G^{t}(-1,y;z)|^{2s}%
\}+\mathbf{E}^{1/2}\{|G^{t}(1,y;z)|^{2s}\}).  \notag
\end{align}%
We will use now Theorem 8.7 of \cite{Ai-Wa:15}, according to which if $A_{0}$
is a selfadjoint operator in $l^{2}(\mathbb{Z}^{d})$, $U=\{U(x)\}_{x\in
\mathbb{Z}^{d}}$ is a collection of independent random variables whose
probability densities $f_x,\, x \in \mathbb{Z}$ are bounded uniformly in $x$%
, i.e., $\sup_{x \in \mathbb{Z}} \sup_{v \in \mathbb{R}} < \infty$ and if $%
\mathcal{G}(z)=(A-z)^{-1}=\{\mathcal{G} (x,y;z)\}_{x,y\in \mathbb{Z}^{d}}$
is the resolvent of $A=A_{0}+U$, then for any $s\in (0,1)$ there exists $%
C_{s}^{\prime }<\infty $ such that the bound%
\begin{equation*}
\mathbf{E}\{|\mathcal{G}(x,y;z)|^{s}\}\leq C_{s}^{\prime }
\end{equation*}%
holds uniformly in $z\in \mathbb{C\setminus R}$\ for all $x,y\in \mathbb{Z}%
^{d}$.

Choosing here $A=H^{t}$ with $H^{t}$ of (\pageref{ht}) and noting that for
the potential $V^t$ the conditions of the theorem are satisfied (all $%
V^t(x)=V(x), \, x\neq 0$ are i.i.d random variables with a bounded common
probability density and $V^t(0)=V(0)+t$ has the density $f(v-t)$ also
bounded), we obtain for any $s\in (0,1),\;\lambda <E, \; \eta \neq 0$ and
all $x,y\in \mathbb{Z}^{d}$
\begin{equation}
\mathbf{E}\{|G^{t}(x,y;\lambda +i\eta )|^{s}\}\leq C_{s}^{\prime \prime },
\label{gtxyb}
\end{equation}%
where $C_{s}^{\prime \prime }$ does not depend on $t,E$ and $\eta $.

Plugging this bound with $x=\pm 1$ into (\ref{gt0ys}), we get for any $%
s_{1}\in (0,1/2),\;\lambda <E<t$ and all $y\leq 0$
\begin{equation}
\mathbf{E}\{|G^{t}(0,y;\lambda +i\eta )|^{s_{1}}\}\leq
B_{s_{1}}/(t-E)^{s_{1}},  \label{eg0yf}
\end{equation}%
where $B_{s_{1}}$ does not depend on $t,E$ and $\eta $.

Analogous argument yields for $s_{1}\in (0,1/2),\;\lambda <E<t$ and all $%
x\geq 0$
\begin{equation}
\mathbf{E}\{|G^{t}(x,0;\lambda +i\eta )|^{s_{1}}\}\leq
B_{s_{1}}/(t-E)^{s_{1}}.  \label{egx0f}
\end{equation}%
We obtain (\ref{eg00}), hence assertion (i) of the lemma, from (\ref{eg0yf})
with $y=0$ (or from (\ref{egx0f}) with $x=0$).

To prove (\ref{egxy}), hence assertion (ii) of the lemma, we combine (\ref%
{gt1}) and (\ref{gt2}) to write for any $s>0$ and all $x\geq 1$ and $y\leq 1$%
:
\begin{align*}
|G^{t}(x,y;z)|^{s}& =|G^{t}(x,0;z)|^{s/2}|G^{t}(0,y;z)|^{s/2} \\
\times| & G_{+}(x,1;z)|^{s/2}|G_{-}(-1,y;z)|^{s/2}
\end{align*}%
and then, by H\"{o}lder inequality for expectations,%
\begin{eqnarray}
&&\mathbf{E}\{|G^{t}(x,y;z)|^{s}\}\leq \mathbf{E}^{1/4}\{|G^{t}(x,0;z)|^{2s}%
\}\mathbf{E}^{1/4}\{|G^{t}(0,y)|^{3s}\}  \label{g4} \\
&&\hspace{5cm}\times \mathbf{E}^{1/4}\{|G_{+}(x,1)|^{2s}\}\mathbf{E}%
^{1/4}\{|G_{-}(-1,y)|^{2s}\}.  \notag
\end{eqnarray}%
Using here (\ref{eg0yf}) and (\ref{egx0f}), we get for $s_{1}\in (0,1/2)$,
and all $x\geq 1$ and$\;y\leq -1$

\begin{equation}
\mathbf{E}\{|G^{t}(x,y;z)|^{s_{1}}\}\leq \frac{B_{s_{1}}}{(t-E)^{s_{1}}}%
\mathbf{E}^{1/4}\{|G_{+}(x,1)|^{2s_{1}}\}\mathbf{E}^{1/4}%
\{|G_{-}(-1,y)|^{2s_{1}}\}.  \label{gsi}
\end{equation}%
To bound the two last factors on the right, we will use a result from \cite%
{Mi:96} according to which if $H_{+}$ is the discrete one dimensional Schr%
\"{o}dinger operator in $l^{2}([1,\infty ))$ with i.i.d. potential whose
common probability law is such that $\mathbf{E}\{|V(1)|^{\kappa }\}<\infty $
for some $\kappa >0$, then 
for any spectral interval $I$ there exist $C(I)<\infty ,\,\ \gamma (I)>0 $
and $s_{2}\leq \kappa /2$ such that%
\begin{equation}
\mathbf{E}\{|G_{+}(x,1)|^{s_{_{2}}}\}\leq C(I)e^{-\gamma (I)x},\;x\geq 1.
\label{min1}
\end{equation}%
The same is valid for the Hamiltonian $H_{-}$ acting in $l^{2}((-\infty
,-1]) $, see (\ref{hh}):
\begin{equation}
\mathbf{E}\{|G_{+}(-1,y)|^{s_{_{2}}}\}\leq C(I)e^{\gamma (I)y},\;y\leq -1.
\label{min2}
\end{equation}%
The bounds are the basic ingredient of the proof of (\ref{b_P}) for the one
dimensional case \cite{Mi:96}.

Using these bounds in the r.h.s. of (\ref{gsi}), we obtain assertion (ii) of
the lemma with $s=\min \{s_{1},s_{2}\}$, where $s_{1}$ and $s_{2}$ are
defined in (\ref{eg0yf}) -- (\ref{egx0f}) and (\ref{min1}) -- (\ref{min2}).
\end{proof}

\begin{remark}
\label{r:weyl} (i) By using the standard facts of spectral theory, it is
easy to prove the weaker version of Lemma \ref{l:pvan}%
\begin{equation}
\lim_{t\rightarrow \infty }P^{t}(x,y)=0  \label{lp}
\end{equation}%
valid for every $x\geq 0,\,y\leq 0$. This, however, does not allow us to justifies the limiting transition $%
t\rightarrow \infty $ in the second term in the r.h.s. of (\ref{espt}).

Indeed, according to the spectral theorem%
\begin{equation*}
G^{t}(x,y;z)=\int_{-\infty }^{\infty }\frac{\mathcal{E}_{H^{t}}(x,y;d\lambda
)}{\lambda - z},\,\, \Im z\neq 0.
\end{equation*}%
The formula, the continuity properties of the Stieltjes transform of a
bounded signed measure and (\ref{fp}) imply that (\ref{lp}) follows from the
analogous limiting relation for the resolvent with $\Im z\neq 0$:%
\begin{equation}
\lim_{t\rightarrow \infty }G^{t}(x,y;z)=0,\,\, x\geq 0,\,y\leq 0.  \label{lg}
\end{equation}%
Viewing the term $t$ of the $(0,0)$th entry $V(0)+t$ of $H^{t}$ in (\ref{ht}%
) as a rank one perturbation of $H=H^{t}|_{t=0}$, we obtain%
\begin{equation*}
G^{t}(x,y;z)=G(x,y;z)-\frac{tG(x,0;z)G(0,y;z)}{1+tG(0,0;z)},
\end{equation*}%
hence%
\begin{equation}
\lim_{t\rightarrow \infty }G^{t}(x,y;z)=G(x,y;z)-\frac{G(x,0;z)G(0,y;z)}{%
G(0,0;z)}.  \label{lgt}
\end{equation}%
Recall now the Weyl formula for the resolvent of the discrete Schr\"odinger
operator (see, e.g. \cite{Te:99}, Section 1.2):%
\begin{equation*}
G(x,y;z)=G(0,0;z)\left\{
\begin{array}{cc}
\psi _{+}(x;z)\psi _{-}(y;z), & x\geq y, \\
\psi _{-}(x;z)\psi _{+}(y;z), & x\leq y,%
\end{array}%
\right.
\end{equation*}%
where $\psi _{\pm}$ are the solutions of the corresponding discrete
Schr\"odinger equation which belong to $l^{2}(\mathbb{Z}_{\pm })$ for $\Im z
\neq 0 $ and satisfy the condition $\psi_{\pm }(0)=1$. Combining the formula
with (\ref{lgt}), we obtain (\ref{lg}), hence (\ref{lp}).

\medskip (ii) The scheme of proof of an analog of (\ref{var}) -- (\ref{AF})
for the R\'enyi entanglement entropy $S_{[-M,M]}^{(\alpha)}$ of (\ref{Rf})
-- (\ref{hal}) is as follows. Analogs of (\ref{sp1}) -- (\ref{spm}) for $%
S_{[-M,M]}^{(\alpha)}$ with $\alpha \in (0,1]$ are proved in \cite{El-Co:17}.
The proof is rather involved since for $\alpha<1$ the function $h_{\alpha}$
of (\ref{hal}) is H\"older continuous with the H\"older exponent $\alpha <1$%
. The case $\alpha \ge 1$ where $h_{\alpha}$ is H\"older continuous with the
H\"older exponent 1 is just a particular and rather simple case of the
general case of \cite{El-Co:17}. We then start from this result and repeat
almost literally the above proof of (\ref{var}) -- (\ref{AF}) dealing with
the von Neumann entropy.
\end{remark}

\textbf{Acknowledgment} The work is supported in part by the grant 4/16-M of
the National Academy of Sciences of Ukraine.


\end{document}